\def\la{\lambda}
\def\al{\alpha}
\def\ta{\theta}
\def\g{\gamma}
\def\then{\Longrightarrow}
\def\Re{\mathbf{R}}
\def\cala{\mathcal{A}}
\def\T{\mathcal{S}}
\newcommand{\df}[1]{\textit{#1}}
\theoremstyle{plain}
\newtheorem{theorem}{Theorem}
\newtheorem{lemma}[theorem]{Lemma}
\newtheorem{proposition}[theorem]{Proposition}
\newtheorem{corollary}[theorem]{Corollary}
\theoremstyle{remark}
\newtheorem{remark}[theorem]{Remark}
\begin{document}

\title[Statistical discrimination]{A characterization of ``Phelpsian''
  statistical discrimination}

\thanks{
Echenique thanks the NSF for support through the grants
 SES-1558757 and CNS-1518941. We are grateful to Leeat Yariv for comments
    on a previous draft.}

\author[Chambers]{Christopher P. Chambers}
\author[Echenique]{Federico Echenique}

\address[Chambers]{Department of Economics, Georgetown University}
\address[Echenique]{Division of the Humanities and Social Sciences,
  California Institute of Technology}

\begin{abstract}We establish that statistical discrimination is
  possible if and only if it is impossible to uniquely identify the
  signal structure observed by an employer from a realized empirical
  distribution of skills.  The impossibility of statistical
  discrimination is shown to be equivalent to the existence of a fair,
  skill-dependent,  remuneration for workers.   Finally,
  we connect the statistical discrimination literature to Bayesian
  persuasion, establishing that 
  if  discrimination is absent, then the optimal
  signaling problem results in a linear payoff function (as well as a
  kind of converse).\end{abstract}

\maketitle

\section{Introduction}
In seminal contributions, Arrow (\citeyear{arrow71,arrow73}) and
\cite{phelps} postulated that discrimination along racial lines, or
gender identities, can have a statistical explanation. In this note we
focus on Phelps' notion of statistical discrimination: on the idea
that two populations of workers, who are in essence identical, may have different
economic remunerations for purely informational reasons.\footnote{
We follow the interpretation of Phelps' model due to
\cite{aigner1977statistical}. Arrow's theory of statistical
discrimination relies on a coordination failure, and is quite
different from Phelps'.
Statistical discrimination stands in contrast with taste-based
discrimination, as in \cite{becker}.}

Phelps' theory connects worker remuneration with the distribution of
signals that can be observed about worker skills. Phelps
assumes a firm who observes a signal about the
underlying skills of a worker. The firm observes the signal before
assigning the worker to a task. The worker is paid her expected
contribution to the firm, conditional on the firm's observed
signal about the worker. (A competitive market ensures that workers
are paid their   contributions.) Consider now two
populations of workers: group A and group B. If the signal is more
informative for As than for Bs, then (the argument goes), a
 worker from group A may be ex-ante more valuable to the firm than a B
 worker. This is because the additional information about the A worker
 may be used to better assign her a task matching her skills.  Even
 more, the signal may be the result of a test that has
been designed with a  population from group A in mind. The signal implemented
by the test will then be more informative about the skills of a
prospective A worker than a B worker.\footnote{As an example, Aigner
  and Cain cite
  evidence from the education literature to the effect that the SAT is
  less informative about the abilities of African-American students
  than it is for white students.}

As a consequence of the difference in informativeness, the firm may value a
group A worker over a group B worker.
We formulate the theory of statistical discrimination using the
language of the recent literature on informational design. A firm
observes a signal about a worker's skills, and bases both the assigned
task and the payment to the worker on the revenue it expects to gain
from the action taken by the worker at the firm.  A group of workers
comes with a distribution over signals: an information structure. The
distribution over signals of group A may be more informative than the
distribution over signals of group B. We say that statistical discrimination
is present if two groups of workers, each group having their distinct
distribution over signals, but {\em the same distribution of skills,}
receive different payments in expectation.

Our contribution is to connect statistical discrimination with two
seemingly distinct properties of the economic environment: one is
identification (in the
econometric sense) of signals from skills, and the other is the linearity of
firm revenue in ``fair'' skill-dependent payoffs. First, we show that
the absence of statistical discrimination is  equivalent to the econometric
identification of signals. Specifically, we prove that statistical
discrimination is not possible if and only if every given distribution
of skills arises from a unique distribution of signals. By definition,
when discrimination is possible, the identification property must be
violated.  Our contribution is in the converse: whenever
identification is impossible, discrimination can arise.

Second, we  show that identification, and therefore the absence of
discrimination,  is equivalent to the existence of a fair skill-based
remuneration for workers.  Workers' payments are a linear combination
of the fair remunerations.  Each list of skills must be associated
with a value, which is independent of any signals, and every worker is
paid the expectation according to the distribution of skills inherent
in her realized signal.

Finally, we show that the optimal information structure in the sense of
\cite{kamenica2011bayesian} achieves precisely the  fair remuneration
in our results.

\section{The model}\label{sec:model}

\subsection{Notation} A set is \df{binary} it is has one or two
elements. If $A$ is a closed subset of a Euclidean space, we denote
by $\Delta(A)$ the set of Borel probability measures on $A$.

\subsection{The model}

The model involves a firm and a worker. The firm faces uncertainty
over the revenues it can obtain from the worker's actions. The firm's revenue
depends on the worker's skills, and how those skills match up with the
technology of the firm. 

Let $\Theta$ denote a finite set of uncertain \emph{states
  of the world}; these states represent the skill set of the worker,
and are unknown to the firm.  The firm asks the 
worker to undertake some action, and it only cares about the
state-contingent payoff that results from the worker's action.
Formally, then, an \emph{action} is an element $a\in\Re^{\Theta}$.
Thus, the task of the firm is to properly match a worker to an action
with the appropriate skill set.

There is a closed set of \emph{signals}, or \emph{payoff-relevant
  types}, $\T\subseteq \Delta(\Theta)$. Here we identify signals with
the posterior distribution that they induce over $\Theta$. The
firm observes $s\in\T$ before asking the worker to undertake an
action.  Thus, the goal of the firm is to choose the appropriate
action for the appropriate worker, after a signal of worker skill
has been observed.

The firm solves the following problem. For a given $s\in \T$, and
finite set of actions $A$,
\[v_A(s)\equiv\max_{a\in A}\sum_{\theta\in \Theta}a(\theta)s(\theta).\]
Given signal $s\in \T$, $v_A(s)$ is the maximal
expected revenue the employer can achieve. We maintain the
assumption that labor markets are competitive, and therefore a worker
of type $s$ is paid the revenue $v_A(s)$ that she generates for the
firm. This is as in \cite{phelps} and \cite{aigner1977statistical}.
Observe that $v_A$ is the ``value function'' of $A$, as in
\citet{blackwell1953equivalent} or \citet{machina}, and is thus always
convex.

A probability $\pi\in\Delta(\T)$ is an \df{information structure}. It induces a probability over $\Theta$ via: $p_{\pi}(\theta)=\int_{\T} s(\theta)d\pi(s)$.  For a set $E\subseteq \T$, we can interpret $\pi(E)$ as an empirical frequency of individuals who generate signals $s\in E$.  The empirical frequency $\pi$ then generates an empirical frequency of skills, which is $p_{\pi}$.

We say that the set of signals $\T$ is \emph{non-discriminatory} if
for any information structures $\pi,\pi'\in\Delta(\T)$, and any finite set $A\subseteq \Re^{\Theta}$, if $p_{\pi}=p_{\pi'}$, then \[\int_{\T} v_A(t)d\pi(t) = \int_{\T} v_A(t)d\pi'(t).\]

Interpret $\pi(E)$ as the frequency of individuals of type $E\subseteq \T$.  Under the competitive markets assumption, the set $\T$ being non-discriminatory means that the
average  remuneration paid to a class of workers with distribution
$\pi$ ultimately depends only on the distribution of their skills.

\subsection{Motivation and a Phelpsian example}

We start by a simple example to recreate the point made by
\cite{phelps}. It is a minimal example; the simplest we can think
of that delivers the Phelpsian message.  Let
$\Theta=\{\ta_1,\ta_2,\ta_3\}$ be the set of states, and $A=\{
(1,0,0),(0,1/2,3) )\}$ be the set of available actions.  Observe that
with this specification, workers are not ``high'' or ``low'' quality,
but they simply have differing aptitudes for the available actions.

Suppose
that \[ \T = \{(1,0,0),(1/2,1/2,0),(0,1/2,1/2),(0,0,1) \}
\] is the set of signals, or worker types.

Consider two information structures, $\pi$ and $\pi'$, described in
the table below, together with the profit function $v_A$ resulting
from our assumed $\Theta$ and $A$:
\[\begin{array}{c|cccc}
         &  t=(1,0,0) & t=(1/2,1/2,0) & t=(0,1/2,1/2) & t=(0,0,1) \\ \hline
\pi(t)  & 1/3      &     0           &     2/3       & 0 \\
\pi'(t) & 0         &     2/3        &     0           & 1/3 \\
v_A(t) & 1         &    1/2         &     7/4        & 3 \\
\end{array} \]

There are two populations of workers, say A and B. The two populations
differ in the information that the firm obtains about their
skills. The workers might take a
test, as in \cite{phelps}, and the informational content of the test
might be different for the two populations. So As emit signals
about their skills as given by $\pi$, while Bs distribution over
signals is  $\pi'$. Observe that $p_{\pi}=p_{\pi'} = (1/3,1/3,1/3)$,
reflecting that the populations overall have the same skills.

A  worker from group A reveals that she is either good for action
$a_1=(1,0,0)$ or action
$a_2=(0,1/3,3)$. The B worker reveals the same kind of information,
but less efficiently: a signal $t=(1/2,1/2,0)$ tells the employer that
$a_1$ is the optimal choice given the information at hand, but leaves
the employer  with some doubts as to whether $a_2$ may have been the
optimal action. In consequence, we have
 \[
\int_T v_A(t)d\pi(t) = 1/3 +7/6 > 1/3 + 1 = \int_T v_A(t)d\pi'(t).\]
If workers are paid according to the revenues that they contribute to
the firm, as would be the case in a competitive market, then A workers
are paid more than B workers in aggregate. The differences in expected (or
population) remuneration between the two is purely a consequence of
the informational content in their corresponding signal structures.

In our example of Phelpsian statistical discrimination, the two different
information structures have the same mean.  This is a necessary
requirement for the existence of statistical discrimination. It is
important to point out,  however, that skill can {\em always} be
inferred from wages, even when there is discrimination. We present
Proposition~\ref{prop:wage} to make this point.

For a set of actions $A=\{a_1,\ldots,a_n\}$, and action $k$, let
$A+k=\{a_1+k,\ldots,a_n+k\}$.

\begin{proposition}\label{prop:wage}
For any $\T$ and any set of actions $A$, if $\pi,\pi'\in\Delta(\T)$ for which $p_{\pi}\neq p_{\pi'}$, then there is $k$ for which \[\int_T v_{A+k}(t)d\pi(t) \neq \int_T v_{A+k}(t)d\pi'(t).\]
\end{proposition}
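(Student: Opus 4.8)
The plan is to exploit the single observation that translating every action in $A$ by a common vector $k \in \Re^{\Theta}$ shifts the value function by a linear term only. Concretely, for any signal $s \in \T$,
\[
v_{A+k}(s) = \max_{a\in A}\sum_{\theta\in\Theta}\big(a(\theta)+k(\theta)\big)s(\theta) = v_A(s) + \sum_{\theta\in\Theta}k(\theta)s(\theta),
\]
since the term $\sum_{\theta}k(\theta)s(\theta)$ does not involve the maximizing action $a$ and therefore factors out of the maximum. This identity is what does essentially all the work; note that $A+k$ is again a finite set of actions, so the expression is well defined for every $k$.

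First I would integrate this identity against $\pi$ and against $\pi'$. Using the definition $p_{\pi}(\theta)=\int_{\T}s(\theta)\,d\pi(s)$ together with linearity of the integral (and noting that $v_A$ is bounded and continuous on the compact set $\T$, so all integrals are finite), I obtain
\[
\int_{\T}v_{A+k}\,d\pi = \int_{\T}v_A\,d\pi + \sum_{\theta\in\Theta}k(\theta)p_{\pi}(\theta),
\]
and the analogous identity for $\pi'$. Subtracting the two gives
\[
\int_{\T}v_{A+k}\,d\pi - \int_{\T}v_{A+k}\,d\pi' = C + \sum_{\theta\in\Theta}k(\theta)\big(p_{\pi}(\theta)-p_{\pi'}(\theta)\big),
\]
where $C=\int_{\T}v_A\,d\pi-\int_{\T}v_A\,d\pi'$ is a constant that does not depend on $k$.

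It then remains only to choose $k$. Write $d=p_{\pi}-p_{\pi'}$, which is a nonzero vector in $\Re^{\Theta}$ by hypothesis. The right-hand side above is an affine function of $k$ whose linear part $k\mapsto\sum_{\theta}k(\theta)d(\theta)$ is a nonzero functional, hence cannot vanish identically. If $C\neq 0$, take $k=0$; if $C=0$, take $k=d$, which yields $\sum_{\theta}d(\theta)^2=\|d\|^2>0$. In either case the difference is nonzero, which is exactly the claim.

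I do not expect any real obstacle here: the statement follows immediately once the translation identity $v_{A+k}=v_A+\sum_{\theta}k(\theta)s(\theta)$ is recognized. The only point requiring the slightest care is the final case split, which guarantees that the fixed constant $C$ cannot conspire with the chosen $k$ to leave the difference at zero; selecting $k$ from the two candidate values $0$ and $d$ disposes of this.
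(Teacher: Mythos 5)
Your proof is correct and rests on exactly the same observation as the paper's: the translation identity $v_{A+k}(s)=v_A(s)+\sum_{\theta}k(\theta)s(\theta)$, which makes the difference of integrals an affine function of $k$ with nonzero linear part $k\mapsto k\cdot(p_{\pi}-p_{\pi'})$. The paper chooses $k=\alpha l$ with $l\cdot(p_\pi-p_{\pi'})\neq 0$ and $\alpha$ avoiding the single bad value, while you split cases on whether the constant term vanishes; these are interchangeable ways of finishing the same argument.
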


\subsection{When discrimination is impossible.}

Our  discussion suggests that discrimination is tied to
identification. Skills are always identified from payoffs, even when there is
discrimination (Proposition~\ref{prop:wage}). The problem is the
converse identification: Here we show that the absence of discrimination is
equivalent to the ability to estimate skills from
signals. Importantly, we show that this can only happen when payments are linear in
signals. So the absence of discrimination is equivalent to the
existence of a state-dependent, signal-independent, ``fair'' payoff. Payments  equal the
expected value of such a payoff, and are called fair valuations.

We say that $\T$
\begin{itemize}
\item is \emph{identified} if for any $\pi,\pi'\in\Delta(\T)$, if
  $p_{\pi}=p_{\pi'}$, then $\pi = \pi'$;
\item \emph{admits fair valuations} if for any finite subset $A\subseteq \Re^{\Theta}$, there is $\alpha_A\in\Re^{\Theta}$ for which for all $t\in \T$, \[v_A(t)=\sum_{\theta}\alpha_A(\theta)t(\theta).\]
\item \emph{admits fair valuations for binary sets} if for any binary subset $A\subseteq \Re^{\Theta}$, there is $\alpha_A\in\Re^{\Theta}$ for which for all $t\in \T$, $v_A(t)=\sum_{\theta}\alpha_A(\theta)t(\theta)$.
\end{itemize}

The notion that $\T$ admits fair valuations captures the idea that any
individual is paid according to her expected skill. Thus, for $A$,
$\alpha_A(\theta)$ represents the value to the firm with technology
$A$ of skill set $\theta\in\Theta$, and if an individual sends signal
$s$ then she is paid the expected value of $\al_A$ according to
$s$. Importantly, if $\pi\in \Delta(\T)$, then \[\int v_A(s)d\pi(s) =
\al_A\cdot \int sd\pi(s) = \al_A\cdot p_\pi.\] So, under fair
valuations,  the expected payment to a population
of agents with information structure $\pi$ only depends on the
distribution of skills in that population.

Finally, say that $\T$ is \emph{non-discriminatory for binary sets} if
for any $\pi,\pi'\in\Delta(\T)$ and any binary $A\subseteq
\Re^{\Theta}$, if $p_{\pi}=p_{\pi'}$, then \[\int_{\T} v_A(t)d\pi(t) =
\int_{\T} v_A(t)d\pi'(t).\]

\begin{theorem}\label{thm:ident}The following are equivalent.
\begin{enumerate}
\item $\T$ is non-discriminatory.
\item $\T$ is non-discriminatory for binary sets.
\item $\T$ is identified.
\item $\T$ admits fair valuations.
\item $\T$ admits fair valuations for binary sets.
\end{enumerate}
\end{theorem}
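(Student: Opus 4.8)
The plan is to route every implication through a single geometric characterization: \T\ is identified if and only if it is a \emph{finite, affinely independent} subset of $\Delta(\Theta)$. Since $\Delta(\Theta)$ lies in the hyperplane $\{s:\sum_\theta s(\theta)=1\}$, which misses the origin, I would first record the elementary dictionary that a finite family in \T\ is affinely independent exactly when it is linearly independent: applying $\sum_\theta$ to any vanishing affine combination $\sum_i\lambda_i s_i=0$ shows the coefficients already satisfy $\sum_i\lambda_i=0$. This is what lets me pass freely between the barycenter map $\pi\mapsto p_\pi$ and linear interpolation of values.

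For the backbone characterization, the direction ``finite and affinely independent $\Rightarrow$ identified'' is immediate: writing $\pi=\sum_i\pi_i\delta_{s_i}$, the equation $p_\pi=p_{\pi'}$ reads $\sum_i(\pi_i-\pi_i')s_i=0$ with $\sum_i(\pi_i-\pi_i')=0$, so affine independence forces $\pi=\pi'$. For the converse I would argue contrapositively. If \T\ is infinite it contains $|\Theta|+1$ distinct points, and any such family is affinely dependent because $\Delta(\Theta)$ spans only a $(|\Theta|-1)$-dimensional affine set; if \T\ is finite but affinely dependent there is a nontrivial relation $\sum_i\lambda_i s_i=0$ with $\sum_i\lambda_i=0$. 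Splitting the indices by the sign of $\lambda_i$ and normalizing produces two distinct, finitely supported measures with a common barycenter, so \T\ is not identified. The payoff of this step is that whenever \T\ fails to be identified, a \emph{finitely supported} witness $\pi\neq\pi'$ with $p_\pi=p_{\pi'}$ already exists.

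With the characterization in hand the implications close into a cycle. For $(3)\Rightarrow(4)$, identification gives $\T=\{s_1,\dots,s_m\}$ linearly independent, so for any finite $A$ the system $\alpha_A\cdot s_i=v_A(s_i)$ is solvable, which is exactly a fair valuation. The implications $(4)\Rightarrow(1)$ and $(5)\Rightarrow(2)$ are the computation already recorded in the text, $\int v_A\,d\pi=\alpha_A\cdot p_\pi$; and $(1)\Rightarrow(2)$, $(4)\Rightarrow(5)$ are trivial because binary sets are finite sets. It then remains only to prove $(2)\Rightarrow(3)$.

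This last step is the crux, and I would prove its contrapositive. Assuming \T\ is not identified, take the finitely supported witness from the characterization and set $\mu=\pi-\pi'$, a nonzero signed measure with $\mu(\T)=0$ and $\int s\,d\mu=0$. For a binary $A=\{a_1,a_2\}$ with $k=a_2-a_1$ one has $v_A(s)=a_1\cdot s+(k\cdot s)^+$, so the two vanishing moments of $\mu$ collapse the difference to $\int v_A\,d\pi-\int v_A\,d\pi'=\frac{1}{2}\int|k\cdot s|\,d\mu(s)$. The hard part is to produce a single $k$ making this nonzero, and this is where the argument really lives. Viewing $F(k)=\sum_i\mu_i|k\cdot s_i|$ as a function on $\Re^{\Theta}$, each summand is affine except for a gradient jump of size $2\mu_i s_i$ across the hyperplane $\{k:k\cdot s_i=0\}$; because the $s_i$ are distinct points of the simplex, no two of these hyperplanes coincide, so at a generic point of the $j$-th hyperplane $F$ is smooth apart from the lone term $\mu_j|k\cdot s_j|$. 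Were $F\equiv 0$ this kink would have to vanish, forcing $\mu_j=0$ for every $j$ and contradicting $\mu\neq 0$. Hence some $k$ separates $\pi$ and $\pi'$, exhibiting a binary $A$ for which non-discrimination fails. I expect this kink analysis to be the main obstacle, since it is the only place where the passage from arbitrary finite action sets down to binary ones must be shown to cost nothing.
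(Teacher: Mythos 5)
Your proof is correct, but it takes a genuinely different route from the paper's. The paper works with the closed convex hull $T$ of $\T$ and runs everything through Choquet theory: the Choquet--Meyer theorem ($T$ is a simplex iff $\partial T$ is identified), the lattice structure of the space $\cala(T)$ of affine functions, and the identity between the concave envelope $Y_A$ and the persuasion value $W_A$. Its proof of $2\Rightarrow 3$ splits into showing $\T=\partial T$ (via an explicitly constructed binary menu $\{f,-f\}$ that separates a non-extreme signal from the measures averaging to it) and showing $T$ is a simplex (via the failure of the lattice property producing a non-affine $Y_A$). You instead replace the entire simplex apparatus with the elementary finite-dimensional observation that $\T$ is identified iff it is finite and affinely independent (equivalently, linearly independent, since $\Delta(\Theta)$ misses the origin), which turns $3\Leftrightarrow 4\Leftrightarrow 5$ into solvable linear systems; and for the crux $2\Rightarrow 3$ you extract a finitely supported signed measure $\mu$ with two vanishing moments and show $k\mapsto\int|k\cdot s|\,d\mu(s)$ cannot vanish identically because each support point contributes a kink across a distinct hyperplane (distinct because no two points of the simplex are scalar multiples of one another). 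Both arguments ultimately exhibit a binary menu witnessing discrimination; yours is self-contained and avoids Choquet theory, which is arguably overkill for finite $\Theta$, while the paper's machinery ($Y_A=W_A$, the lattice of affine functions) is reused for Proposition~\ref{prop:LP} and Corollary~\ref{cor:KG} and is the natural language for the Bayesian-persuasion connection. The only points worth making explicit in a write-up are the two small genericity facts your kink argument relies on: that a hyperplane $\{k:k\cdot s_j=0\}$ is not covered by the finitely many codimension-two intersections with the other hyperplanes (so ``generic'' points exist, requiring $|\Theta|\geq 2$, the case $|\Theta|=1$ being trivial), and that in the sign-splitting step both sign classes are nonempty because the coefficients sum to zero.
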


The main import of Theorem~\ref{thm:ident} is that there is a $\al_A$,
independent of the signal $s$, so that the optimal contribution of the worker to
the firm is the expected value of $\al_A$.  The worker is therefore
remunerated according to some ``fundamental'' value $\al_A$, and
receives the expectation of $\al_A$ according to the signal $s$.

\begin{proposition}\label{prop:LP}If $\T$ admits fair valuations, then
  for each finite $A\subseteq\Re^{\Theta}$ and corresponding
  $\alpha_A\in\Re^{\Theta}$, we have for every $s^*\in\T$:
\[\sum_{\theta}\alpha_A(\theta)s^*(\theta) =
\inf\{\sum_{\theta}y(\theta)s^*(\theta):y\in\Re^{\Theta}\mbox{ and
}v_A(s)\leq \sum_{\theta}y(\theta)s(\theta)\forall s\in \T\}.\]
\end{proposition}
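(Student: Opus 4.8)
The plan is to prove the identity by two matching inequalities, exhibiting $\alpha_A$ itself as the minimizer. Throughout, write $\langle y,s\rangle=\sum_{\theta}y(\theta)s(\theta)$ and let $Y=\{y\in\Re^{\Theta}: v_A(s)\leq\langle y,s\rangle\text{ for all }s\in\T\}$ denote the feasible set of the right-hand side.

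First I would verify that $\alpha_A\in Y$. Since $\T$ admits fair valuations, we have $v_A(s)=\langle\alpha_A,s\rangle$ for every $s\in\T$; in particular $\langle\alpha_A,s\rangle\geq v_A(s)$ holds (with equality), so $\alpha_A$ satisfies every constraint and is feasible. Evaluating the objective at this feasible point gives $\inf_{y\in Y}\langle y,s^*\rangle\leq\langle\alpha_A,s^*\rangle$, the first inequality.

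For the reverse inequality, the key observation is that $s^*$ is itself one of the signals indexing the constraints, since $s^*\in\T$. Hence any feasible $y\in Y$ must satisfy the constraint at $s=s^*$, namely $\langle y,s^*\rangle\geq v_A(s^*)$. Applying fair valuations once more, $v_A(s^*)=\langle\alpha_A,s^*\rangle$, so $\langle y,s^*\rangle\geq\langle\alpha_A,s^*\rangle$ for every feasible $y$. Taking the infimum over $y\in Y$ yields $\inf_{y\in Y}\langle y,s^*\rangle\geq\langle\alpha_A,s^*\rangle$. Combining the two inequalities gives the stated equality, with the infimum in fact attained at $y=\alpha_A$.

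There is no genuine obstacle here: the entire content is the recognition that the constraint indexed by $s^*$ pins the objective from below by $v_A(s^*)$, which fair valuations identifies with $\langle\alpha_A,s^*\rangle$, while $\alpha_A$ simultaneously attains that bound. It is perhaps worth remarking that the right-hand side is the value of a linear program whose dual (the support-function representation familiar from \cite{kamenica2011bayesian}) equals $\sup\{\int v_A\,d\mu:\mu\in\Delta(\T),\ \int s\,d\mu=s^*\}$; under fair valuations $v_A$ is linear on $\T$, so every feasible $\mu$ returns $\langle\alpha_A,s^*\rangle$ and the duality is transparent. This dual view is not needed for the argument but anticipates the subsequent connection to optimal signaling.
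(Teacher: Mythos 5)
Your proof is correct, and it takes a genuinely different and more elementary route than the paper's. You argue by two direct inequalities: fair valuations makes $\alpha_A$ itself feasible for the program (giving $\inf \leq \alpha_A\cdot s^*$), and because $s^*\in\T$ the constraint indexed by $s^*$ forces every feasible $y$ to satisfy $y\cdot s^*\geq v_A(s^*)=\alpha_A\cdot s^*$ (giving $\inf\geq\alpha_A\cdot s^*$). This uses nothing beyond the definitions, and it even shows the infimum is attained at $\alpha_A$. The paper instead writes the Lagrangian of the maximization problem defining $W_A(s^*)$ and invokes a minimax theorem (using compactness of $\Delta(T)$) to identify the right-hand side with $W_A(s^*)$, which under fair valuations equals $\alpha_A\cdot s^*$. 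What the paper's heavier route buys is the duality statement itself -- the identification of the infimum with the persuasion value $W_A$ on all of $T$, not merely at points of $\T$ -- which underpins the interpretation in the Bayesian-persuasion section and Corollary~\ref{cor:KG}; your closing remark about the LP dual shows you see this connection, but your argument correctly observes that none of it is needed for the stated claim, precisely because the statement only quantifies over $s^*\in\T$, where the constraint set already contains the binding inequality.
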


Proposition~\ref{prop:LP} means that the value of a
worker with type $s^*$ to the firm is the minimum expected payment that
guarantees the worker a payoff of at least $v_A(s)$, for all signals
$s\in \T$. This is a kind of participation, or individual rationality,
constraint. The worker may be able to guarantee a payment of $v_A(s)$
on the market, if her signal is $s$, and thus a firm must guarantee at least
$v_A(s)$ in its choice of the ``fair'' payoff $\al_A\in\Re^\Theta$.

\subsection{Connection to Bayesian persuasion}\label{sec:KB} The recent literature
on Bayesian persuasion (\cite{kamenica2011bayesian}) deals with the
optimal design of information structures. It turns out that the value
of optimal information design is linear if and only if $\T$ admits no
discrimination.

We now focus a bit more in depth on the notion of signal structure.
As in \citet{blackwell1953equivalent}, there is a natural notion of
``comparative informativeness'' for $\pi,\pi'\in\Delta(\T)$.  We say
that $\pi$ is \emph{more informative} than $\pi'$ if for every $A$,
$\int v_A(t) d\pi(t) \geq \int v_A(t) d\pi'(t)$.  Most economists will
have heard of the notion of a ``mean-preserving spread;'' $\pi$ turns
out to be more informative than $\pi'$ if it consists of a
mean-preserving spread of $\pi'$.

We know that optimal information design will never utilize signal
structures that are dominated according to the more informativeness
order.  As a result,  optimal information structures will place
probability zero on signals that can be obtained as the mean of other
signals. Formally, an optimal information structure will have support
on  the extreme points of the convex hull of $\T$.


Now, let $T$ be the closed convex hull of $\T$. An information
structure is any probability measure $\pi\in\Delta(T)$. Then
define $W_A:T\rightarrow\Re$ via
\[
W_A(s) \equiv \max\{
\int_{T} v_A(\tilde s) d\pi(\tilde s) : \pi\in\Delta(T)\text{ and }  s
= \int_T \tilde s d\pi(\tilde s)
\}.\] $W_A(s)$ is the value of an optimal information structure for a
population with skill distribution $s$.
In the following, $\partial T$ denotes  the extreme
points of $T$; those points which are not convex combinations of other
points in $T$.

Return to our motivating ``Phelpsian'' example.  There, discrimination
was present even though  $\T$ consisted of the extreme points of its
convex hull $T$, and thus $\T$ was maximally informative. Phelps'
original point can thus be refined: discrimination obtains because an employer
has ``different'' information about two classes of individuals, rather
than ``better'' information.

Let us see how this manifests itself in the choice of optimal
information structure.  In this case, for each $s\in \T$, we have
(clearly) $W_A(s)=v_A(s)$, as each $s$ is extreme in the convex hull
of $T$.  We therefore obtain:
$(2/3)W_A(1/2,1/2,0)+(1/3)W_A(0,0,1)=\frac{4}{3}<\frac{3}{2}=(1/3)v_A(1,0,0)+(2/3)v_A(0,1/2,1/2)\leq
W_A(1/3,1/3,1/3)$.

Hence, $W_A$ is nonlinear in this case.  This is a general artifact of
non-identification and discrimination, as is evidenced by the
following result.

\begin{corollary}\label{cor:KG}
For any $\T$, $\partial T$ is non-discriminatory iff for every $A$,
$W_A$ is affine (linear).\footnote{Because the domain of $W_A$ is a
  set of probability measures, $W_A$ is linear if it is affine. In
  fact, in this case we have $W_A(s) =
  \sum_{\ta\in\Theta}\al_A(\ta)s(\ta)$, where $\al_A$ is
  as in Proposition~\ref{prop:LP}.}
\end{corollary}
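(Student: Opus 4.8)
The plan is to run everything through Theorem~\ref{thm:ident}, applied with $\partial T$ playing the role of the set of signals. (Since $\Theta$ is finite, $\T\subseteq\Delta(\Theta)$ is compact, hence $T$ is compact and convex, and by Milman's theorem $\partial T\subseteq\T$, so $\partial T$ is a legitimate set of signals.) The bridge between $W_A$ and $\partial T$ is the single identity $W_A(t)=v_A(t)$ for every extreme point $t\in\partial T$ (morally, $W_A$ is the concavification of $v_A$ on $T$, and a concave function meets the function it dominates at extreme points). This identity holds because the only $\pi\in\Delta(T)$ whose barycenter $\int_T\tilde s\,d\pi(\tilde s)$ equals an extreme point $t$ is the Dirac measure at $t$; this is the standard Choquet-theoretic fact that extreme points admit only trivial representing measures, and in our finite-dimensional setting it reduces to the elementary observation that $t\notin\mathrm{conv}(T\setminus U)$ for every neighborhood $U$ of an extreme point $t$.

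For the direction ``$\partial T$ non-discriminatory $\Rightarrow W_A$ affine'': by Theorem~\ref{thm:ident}, $\partial T$ admits fair valuations, so for each $A$ there is $\al_A\in\Re^\Theta$ with $v_A(t)=\al_A\cdot t$ for all $t\in\partial T$. I would then establish $W_A(s)=\al_A\cdot s$ for every $s\in T$ by two inequalities. For ``$\geq$'': by the finite-dimensional Krein--Milman (Minkowski--Carath\'eodory) theorem every $s\in T$ is the barycenter of some $\mu\in\Delta(\partial T)\subseteq\Delta(T)$, and this $\mu$ is feasible in the definition of $W_A(s)$ with objective value $\int v_A\,d\mu=\al_A\cdot\int\tilde s\,d\mu=\al_A\cdot s$. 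For ``$\leq$'': since $v_A$ is convex and agrees with the linear map $\al_A\cdot(\,\cdot\,)$ on the extreme points, a convex-combination (Jensen) argument gives $v_A(\tilde s)\leq\al_A\cdot\tilde s$ for all $\tilde s\in T$, so every feasible $\pi$ satisfies $\int v_A\,d\pi\leq\al_A\cdot\int\tilde s\,d\pi=\al_A\cdot s$. Hence $W_A=\al_A\cdot(\,\cdot\,)$, which is affine, indeed linear with the $\al_A$ of Proposition~\ref{prop:LP}.

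For the converse ``$W_A$ affine for all $A\Rightarrow\partial T$ non-discriminatory'': writing the affine $W_A$ in linear form $W_A(s)=\al_A\cdot s$ (the additive constant is absorbed because $\sum_\theta s(\theta)=1$), I would restrict to extreme points. For $t\in\partial T$ the identity $W_A(t)=v_A(t)$ from the first paragraph yields $v_A(t)=\al_A\cdot t$, so $\partial T$ admits fair valuations; Theorem~\ref{thm:ident} then gives that $\partial T$ is non-discriminatory, closing the equivalence.

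The main obstacle is the extreme-point identity $W_A(t)=v_A(t)$. The ``easy'' inequality $W_A(t)\geq v_A(t)$ is immediate from feasibility of $\delta_t$, but because $v_A$ is convex, Jensen pushes $\int v_A\,d\pi$ \emph{above} $v_A(t)$, so the reverse inequality genuinely requires that extreme points have only Dirac representing measures. The remaining work is bookkeeping: keeping $\Delta(\partial T)$ versus $\Delta(T)$ straight in the representation step, and tracking the affine-versus-linear normalization.
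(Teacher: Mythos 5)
Your proof is correct, but it takes a genuinely different route from the paper's. The paper proves the corollary by passing through item (3) of Theorem~\ref{thm:ident} (identification), invoking the Choquet--Meyer characterization of simplices (the concave envelope of every lower semicontinuous convex function is affine) together with Lemma~\ref{lem:yw} to identify $W_A$ with the concave envelope $Y_A$; the converse is handled by observing that an affine $W_A$ serves as the join of $f,g$ in $\cala(T)$, so $\cala(T)$ is a lattice and $T$ is a simplex. You instead pass through items (4)/(5) (fair valuations) and work directly with the definition of $W_A$ as a maximum over representing measures: the forward direction becomes a two-inequality barycenter/Jensen computation, and the converse rests on the Bauer-type fact that an extreme point admits only the Dirac representing measure, so $W_A(t)=v_A(t)$ on $\partial T$. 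What your route buys is that the corollary's proof never needs $W_A=Y_A$ or a second appeal to Choquet--Meyer (all the abstract machinery stays quarantined inside Theorem~\ref{thm:ident}); what it costs is the extreme-point identity, which you correctly flag as the one nontrivial ingredient and justify adequately for the finite-dimensional setting (the paper itself asserts this identity only informally, in the discussion of the Phelpsian example). One shared loose end, not a defect relative to the paper: both arguments treat $\partial T$ as a legitimate signal set when applying Theorem~\ref{thm:ident}, even though the extreme points of a compact convex set in dimension $\geq 3$ need not form a closed set; the paper glosses over this too, so it is not a gap you introduced.
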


As in \citet{kamenica2011bayesian}, $W_A$ is always weakly concave, which
admits the possibility that it is affine.  Corollary~\ref{cor:KG} says
that  discrimination is possible exactly when $W_A$ exhibits strict
concavities.


\section{Conclusion}

We have formulated Phelps' theory of statistical discrimination using
the modern language of information design. Our results shed new light
on the nature of discrimination, and on some of the empirical
approaches one might take to establish the existence of statistical
discrimination.

Statistical discrimination turns out to be equivalent to the absence
of econometric identification of signals from skills. While the
identification of skills from salaries is always possible, even in the
presence of discrimination, we show that the crucial identification
property is that of signals from skills.

In second place, we connect discrimination with the source of worker
remunerations. We show that identification is impossible if and only
if remunerations are linear in ``fair'' skill-dependent,
signal-independent,  payoffs.

Our results have immediate consequences for empirical research on
discrimination. They imply that discrimination is absent if and only
if empirical approaches to linearly estimating fair skills-based
payoffs are viable.

\section{Proofs}

Let $T$ be the closed convex hull of $\T$. Recall that  $\partial T$
denotes the extreme points of $T$.  The definition of $v_A$
extends to $T$.  Let $Y_A:T\rightarrow \Re$ be the concave envelope of
$v_A$, defined as the pointwise infimum of the affine functions that
dominate $v_A$. So if $\cala(T)$ denotes the space of all affine functions
on $T$, then $v_A(t) = \inf\{ l(t): l\in \cala(T)\text{ and } v_A\leq l\}$.
Recall the definition of $W_A$ from Section~\ref{sec:KB}.

\begin{lemma}\label{lem:yw}
$Y_A=W_A$
\end{lemma}

\begin{proof}
  Let $l: T\rightarrow \Re$ be an affine function and $v_A\leq l$.
For any $\pi\in\Delta(T)$ with $\int_T qd\pi(q) = p$,
\[
\int_T v_A(q)d\pi(q) \leq \int_T l(q)d\pi(q)  = l \left( \int_T q
  d\pi(q)\right) = l(p),\] as $l$ is affine. Thus $W_A\leq l$, as $\pi$ was
arbitrary. This implies that $W_A\leq Y_A$, as $l$ was arbitrary.

Now suppose that $W_A(p) < Y_A(p)$. Recall that $W_A$ is concave. Then
the set
$D= \{ (q,y)\in T\times\Re: y\leq W_A(q)  \}$ is closed and convex, so
there exists $\al$ with $(q,y)\cdot \al \leq (p,W_A(p))\cdot \al
< (p,y')\cdot \al$ for all
$(q,y)\in D$ and all $y'\geq Y_A(p)$. Write $\al=(\al^1,\al^2)\in
\Re^\Theta\times\Re$. Clearly we cannot have $\al^2=0$ as
$(p,W_A(p)\in D$. Consider the affine function $l:T\rightarrow \Re$
defined by \[q\mapsto
(1/\al^2) ((p,W_A(p))\cdot \al - \al^1 \cdot q).\]

This means that $l(p)=W_A(p) < Y_A(p)$. Moreover,
for any $q\in T$, $\al\cdot (q,W_A(q)) \leq \al\cdot (p,W_A(p))$;
hence,
\[l(q) =
(1/\al^2) \al^1 \cdot p + W_A(p)  - (1/\al^2) \al^1
\cdot q \geq W_A(q) \geq v_A(q),
\] where the last inequality follows from the definition of
$W_A$. Then $l\in \cala(T)$, $v_A\leq l$, and $l(p)<Y_A(p)$;
a contradiction.
\end{proof}

\subsection{Proof of Theorem~\ref{thm:ident}}

By the Choquet-Meyer Theorem (Theorem II.3.7 in \cite{alfsen2012compact} or p. 56-57 in \citet{phelpschoquet}), $T$ is a simplex
iff $\partial T$ is identified.

Now, to prove the theorem: it is obvious that $3\then 1\then 2$. We
shall prove that $2\then 3$. To this end, let $\T$ be non-discriminatory for
binary menus.
The proof that $2\then 3$ has two parts. The first is to show that
$\T=\partial T$. The second is that $T$ must be a simplex.

First, it is obvious by definition of $T$ that $\partial T\subseteq \T$. So we
prove that  $\T\subseteq \partial T$. To this end, suppose by means of
contradiction that there is $s^*\in \T$ for which there are $t,t'\in
T$, $t\neq t'$, and $\gamma \in (0,1)$ for which $s^* = \gamma t +
(1-\gamma)t'$.
Let $f = (s^*-t)+[t\cdot s^* - s^* \cdot s^*]\mathbf{1}$
and $g=-f$.  Observe that $f \cdot s^* =  0$,
$g\cdot t = -t\cdot (s^*-t) - s^*\cdot (t-s^*) > 0$ and
$f\cdot t' = (s^*-t)\cdot(t'-s^*) = \gamma(1-\gamma)(t'-s^*)\cdot (t'-s^*)>0$.

Let $A \equiv \{f,g\}$. Then we obtain that $v_A(t)\geq g\cdot t>0$,
$v_A(t')\geq f\cdot t'>0$, while $v_A(s^*) = 0$ (as $f \cdot s^*=g
\cdot s^* =  0$).

Now, for each of $t,t'$, there are finitely supported (by
Caratheodory's theorem) $\pi_t$ and $\pi_{t'}$ on $\partial T$ (so in
particular on $\T$) for which $t = \int_{\T}s d\pi(s)$ and $t' =
\int_{\T}s d\pi'(s)$.
This means that
$\int_{\T} v_A(s) d\pi(s) \geq v_A(t)>0$ and
$\int_{\T} v_A(s) d\pi'(s) \geq v_A(t')>0$, as $v_A$ is convex.
Then
\[\int_{\T} v_A(s) d(\g \pi + (1-\g)\pi') (s) >0.\] But this contradicts
2 as $\int_{\T} s d(\g \pi + (1-\g)\pi') (s)  = \g t + (1-\g) t' = s^*$
and $v_A(s^*)=0$.


So we have shown that $\T=\partial T$, and we turn to the proof that
$T$ is a simplex (and thus $\T$ identified). By
\cite{alfsen2012compact} Theorem II.4.1, since $T$ is convex and
compact, $T$ is a simplex if and only if
$\cala(T)$ forms a lattice in the usual (pointwise) ordering on
functions.  So, suppose by means of contradiction that $\cala(T)$ does
not form a lattice.  Then, there are $f,g\in \cala(T)$ which possess
no supremum in $\cala(T)$.

\begin{lemma}\label{lem:obvious} Let $f,g\in\cala(T)$.
For any $z\in\partial T$, if $f(z) \geq g(z)$, then there is $h\in\cala(T)$ for which $h\geq f,g$ and $h(z) = f(z)$.
\end{lemma}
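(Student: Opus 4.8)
The plan is to reduce the claim to the existence of a single supporting affine functional at the extreme point $z$, and to read off that existence from a Farkas/Gordan alternative, with extremality of $z$ supplying exactly the condition that annihilates the dual obstruction.

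First I would normalize by subtracting $f$. It suffices to find an affine $p$ on $T$ with $p\geq 0$ and $p\geq g-f$ throughout $T$ and with $p(z)=0$; then $h:=f+p$ is the desired function. The hypothesis $f(z)\geq g(z)$ gives $(g-f)(z)\leq 0$, so $p(z)=0$ is consistent with $p\geq g-f$ at $z$. Writing $d:=g-f$, the goal becomes an affine $p\geq\max\{0,d\}$ with $p(z)=0$; equivalently, $p$ is an affine function supporting the convex, piecewise-affine map $\max\{0,d\}$ from above at $z$, where that map takes the value $0$.

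Second, I would cut the a priori infinitely many constraints down to $\partial T$. Since $p$ is affine and $\max\{f,g\}$ is convex, the difference $p-\max\{f,g\}$ is concave, and a concave function attains its minimum over the compact convex set $T$ at an extreme point; hence $p\geq\max\{f,g\}$ holds on all of $T$ as soon as it holds on $\partial T$. Thus it is enough to produce a linear $\ell$ so that $h(q):=f(z)+\ell(q-z)$ satisfies $h(q)\geq\max\{f(q),g(q)\}$ for every $q\in\partial T$. Note that $h(z)=f(z)$ holds automatically, and at $q=z$ the required inequality is precisely the hypothesis $f(z)\geq g(z)$.

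Third comes the existence of $\ell$, which is where extremality enters and where I expect the main obstacle. Setting $u_q:=q-z$ and $c_q:=\max\{f(q),g(q)\}-f(z)$ for $q\in\partial T$, I must solve the linear system $\ell\cdot u_q\geq c_q$ for all $q$. By the Farkas alternative this is feasible unless there is a nonnegative combination with $\sum\mu_q u_q=0$ and $\sum\mu_q c_q>0$; but $\sum\mu_q(q-z)=0$ with $\mu\geq 0$ nontrivial would exhibit $z$ as a convex combination of points of $\partial T$ distinct from $z$, contradicting that $z$ is extreme. The obstacle is that this alternative, in finite-dimensional form, presumes the associated moment cone $\{(\sum\mu_q u_q,\sum\mu_q c_q):\mu\geq 0\}$ is closed, and closedness does \emph{not} follow from extremality alone. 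When $\partial T$ is finite this cone is finitely generated, hence closed, and the argument is exactly Gordan's lemma applied at the vertex $z$; the substantive work is to secure closedness (equivalently, attainment of the infimum defining the supporting value) in general, for instance by reducing to finitely many active $q$ via Caratheodory together with a compactness argument on $\partial T$. I would concentrate the effort precisely here, since this is the step that does the real analytic lifting once the combinatorics of extremality have been put in place.
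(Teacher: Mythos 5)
Your argument is correct, and essentially coincides with the paper's, in the case where $\partial T$ is finite. The paper forms the subgraph $M$ of the concave envelope of $v_{\{f,g\}}$, notes it is the convex hull of the points $(s,v_{\{f,g\}}(s))$, $s\in\T$, hence polyhedral, and reads off an affine function supporting $M$ at $(z,f(z))$; your Gordan/Farkas certificate at the vertex $z$ is precisely the dual formulation of that polyhedrality, and your preliminary steps (normalizing by $f$, and invoking the Bauer minimum principle to reduce the constraints to $\partial T$) are fine. So for finite $\T$ the two proofs are the same proof in different clothing.

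The genuine gap is exactly where you locate it, but your plan to close it cannot succeed: when $\partial T$ is infinite the lemma is false, so no amount of Caratheodory-plus-compactness will deliver closedness of the moment cone. Take $|\Theta|=3$ and let $\T=\partial T$ be a circle inside the two-dimensional simplex $\Delta(\Theta)$, so $T$ is a disk; in affine coordinates on $T$ put $f=0$, $g(x,y)=y$, and $z=(1,0)$, so $f(z)=g(z)$. Any affine $h$ with $h\geq 0$ on the disk and $h(z)=0$ must have the form $c(1-x)$ with $c\geq 0$ (the two inequalities at $(\cos\ta,\pm\sin\ta)$ force the $y$-coefficient to vanish), and $c(1-\cos\ta)\geq \sin\ta$ fails for small $\ta>0$ because the left side is second order in $\ta$ while the right side is first order. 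Dually, the points $q_{\pm\ta}=(\cos\ta,\pm\sin\ta)$ with weights $1/\sin\ta$ give $\sum\mu_q u_q\to 0$ while $\sum\mu_q c_q= 1$, so $(0,1)$ lies in the closure of your cone but not in the cone: the primal system is infeasible even though no exact dual certificate exists. The correct resolution is therefore not analytic effort but an added hypothesis: the lemma holds when $\partial T$ is finite (equivalently, $T$ is a polytope). That hypothesis is also what the paper's citation of polyhedrality silently assumes, and it is satisfied where the lemma is actually put to work, since there $T$ is a finite-dimensional simplex and so has at most $|\Theta|$ extreme points.
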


\begin{proof}
Let $M$ be the subgraph of the concave envelope of $v_{\{f,g\}}$.  Observe by definition that it is the convex hull of the points $\{(z,v_{\{f,g\}}):z\in \T\}$, so that it is polyhedral (Corollary 19.I.2 of \citet{rockafellar}).  Therefore, by definition of polyhedral concave function, there is $h$ supporting it at $(z,f(z))$.
\end{proof}

From Lemma~\ref{lem:obvious}, and the fact that $f$ and $g$ possess no
supremum in $\cala(T)$, it follows that there is no affine function
$h$ for which for all $z\in \partial T$, $h(z)= \max\{f(z),g(z)\}$.
Consequently, if $A \equiv \{f,g\}$, then $Y_A$ is not affine, since
for all $z\in \partial T$, it follows that $Y_A(z) =
\max\{f(z),g(z)\} = v_A(z)$.  Now, $Y_A$ being concave and not affine means that there is
$\hat \pi\in\Delta(T)$ with $\int_{\T} Y_A(q)d \hat\pi(q) < Y_A(p_{\hat\pi})$. Since
$\T=\partial T$, and $Y_A$ is concave, we can in fact find (by
Lemma 4.1 in \cite{phelpschoquet})  $\pi\in\Delta(\T)$ with
$p_{\hat\pi} = p_\pi$ and
\[\int_{\T} v_A(q)d\pi(q) =
\int_{\T} Y_A(q)d\pi(q)\leq  \int_{\T} Y_A(q)d\hat \pi(q)< Y_A(p_{\hat\pi})=Y_A(p_\pi),
\] where the first equality follows from $v_A(q)=Y_A(q)$ for $q\in\T$,
and the second inequality from the choice of $\pi$.

Now, by Lemma~\ref{lem:yw}, $Y_A(p_\pi) = \sup\{
\int v_A(q)d \tilde \pi(q) :\tilde\pi\in\Delta(T) \text{ and }
p_{\tilde \pi} = p_\pi\} $. Then there is $\pi'\in\Delta(\T)$ (as the
sup is achieved for a measure with support in $\partial T=\T$) with
$p_\pi = p_{\pi'}$ and
$\int_{\T} v_A(q)d\pi(q) < \int_{\T} v_A(q)d\pi'(q)$, contradicting the fact that $\T$ is non-discriminatory for binary menus.

So, we have shown that $\T$ form the vertices of a simplex.  This establishes that $\T$ is identified.

Now, we establish the equivalence of 3, 4, and 5.  Again, as we already claimed, $T$ is a simplex iff $\cala(T)$ is a lattice.  First, let us show that if $\T$ admits fair valuations for binary sets, then $\cala(T)$ is a lattice.  To see this, let $f,g\in\cala(T)$.  Let $A\equiv \{f,g\}$ and observe that since $T$ is a simplex, there is an affine function $\alpha_A$ such that for each $s\in \T$, we have $\alpha_A \cdot s = \max\{f\cdot s,g\cdot s\}$, since $\T$ admits fair valuations for binary sets.  Clearly, for all $t\in T$, $\alpha_A \cdot t \geq \max\{f\cdot t,g\cdot t\}$.  Suppose $h\in\cala(T)$ with $h\geq f,g$.  Let $t\in T$ be arbitrary, and let $\pi\in\Delta(\T)$ be such that $p_{\pi}=T$.  Then $h(t)=\int_{\T}h(s)d\pi(s)\geq \int_{\T}\alpha_A(s)d\pi(s)=\alpha_A(t)$.  So $\alpha_A$ is the join of $f,g$.  So $T$ is a simplex.  That $\T = \partial T$ can be proved similarly to the above.  So, $\T$ forms the vertices of a simplex, and is hence identified.

Conversely, let us show that if $\T$ is identified, then it admits fair valuations.  So, let $A$ be finite.  View $A$ as a subset of $\cala(T)$.  Let $\alpha_A\in\cala(T)$ be the join of this finite set.  By Lemma~\ref{lem:obvious} and an obvious induction argument, for every $s\in\T$, $\alpha_A\cdot s = \max_{a\in A}a\cdot s=v_A(s)$.  So $A$ admits fair valuations.

\subsection{Proof of Proposition~\ref{prop:LP}}
The Lagrangian for the maximization problem in the definition of $W_A$ is
\begin{align*}
L(\pi,\la) & = \int_{T} v_A(t)d\pi(t) + \la\cdot \left[ p-\int_T
             qd\pi(q)  \right]  \\
& = \la\cdot p + \int_T ( v_A(t) -\la\cdot p )d\pi(t)
\end{align*}
 and apply the maximin theorem (see for example Theorem  6.2.7 in
\cite{aubin2006applied}, which applies here because $\Delta(T)$ is
compact).

\subsection{Proof of Proposition~\ref{prop:wage}}
Observe that for any $A$ and any action $l$, we have $v_{A+l}(t)=v_A(t)+l\cdot t$.  Now, since $p_{\pi}\neq p_{\pi'}$, there is $l$ for which $l\cdot p_{\pi}\neq l\cdot p_{\pi'}$.  Consequently, there is $\alpha$ for which:
\[\alpha l \cdot (p_{\pi}-p_{\pi'})\neq \int_T v_A(t)d\pi'(t)-\int_T v_A(t)d\pi(t).\]

Let $k= \alpha l$, and conclude that:
\[\int_T v_{A+k}(t)d\pi(t)=k\cdot p_{\pi}+\int_T v_A(t)d\pi(t) \neq k\cdot p_{\pi'}+\int_T v_A(t)d\pi'(t)=\int_T v_{A+k}(t)d\pi'(t).\]

\subsection{Proof of Corollary~\ref{cor:KG}}

By the  Choquet-Meyers Theorem (Theorem II.3.7 in \cite{alfsen2012compact}) $T$ is a
simplex iff the concave envelope of every lower semicontinuous and
convex function is affine.  Clearly, when $\T$ is identified, $T$ is a simplex, and since $v_A$ is convex and lower semicontinuous, we obtain that $W_A=Y_A$, the concave envelope.  So $W_A$ is affine.

Conversely, suppose that $W_A$ is affine for each finite $A$.  We will show that $T$ is a simplex (so that $\partial T$ forms the vertices of a simplex, and is identified).  But this again follows from the fact that $W_A$ is the smallest concave function on $T$ dominating each $a\in A$.  Since it is affine, it follows that $\cala(T)$ is a lattice, and hence $T$ is a simplex.

\bibliographystyle{ecta}
\bibliography{disc}

\end{document}